\newcommand{\Rn}{\mathbb{R}^n}
\newcommand{\Rq}{\mathbb{R}^q}
\newcommand{\Rmm}{\mathbb{R}^{m\times m}}
\newcommand{\Rm}{\mathbb{R}^{m}}
\newcommand{\C}{\mathbb{C}}
\newcommand{\Hset}{\mathcal{H}}
\newcommand{\dfn}{:=}
\newcommand{\cl}{\textsc{cl}}
\newcommand{\ob}{\textsc{o}}
\newtheorem{thm}{Theorem}
\newtheorem{lem}{Lemma}
\title{\LARGE \bf Fault-tolerant control under controller-driven
  sampling\\ using virtual actuator strategy*}
\author{Esteban N. Osella$^{1}$, Hernan Haimovich$^1$, and Mar\'{\i}a M. Seron$^{2}$
\thanks{$^*$This work was partially supported by grant PICT 2010-0783, FONCYT-ANPCYT, Argentina.}%
\thanks{$^{1}$ CIFASIS-CONICET and Departamento de Control, Esc. de Ing. Electr\'onica, FCEIA, Universidad Nacional de Rosario, Argentina. {\tt\small \{osella.esteban,h.haimovich\}@gmail.com}}%
\thanks{$^{2}$ Centre for Complex Dynamic Systems and Control, The University of Newcastle, Australia.
 {\tt\small maria.seron@newcastle.edu.au}}%
}
\begin{document}

\maketitle \thispagestyle{empty} \pagestyle{empty}

\begin{abstract}
  We present a new output feedback fault tolerant control strategy for
  continuous-time linear systems. The strategy combines a digital
  nominal controller under controller-driven (varying) sampling with
  virtual-actuator (VA)-based controller reconfiguration to compensate
  for actuator faults.  In the proposed scheme, the controller
  controls both the plant and the sampling period, and performs
  controller reconfiguration by engaging in the loop the VA adapted to
  the diagnosed fault. The VA also operates under controller-driven
  sampling.  Two independent objectives are considered: (a)
  closed-loop stability with setpoint tracking and (b) controller
  reconfiguration under faults. Our main contribution is to extend an
  existing VA-based controller reconfiguration strategy to systems
  under controller-driven sampling in such a way that if objective (a)
  is possible under controller-driven sampling (without VA) and
  objective (b) is possible under uniform sampling (without
  controller-driven sampling), then closed-loop stability and setpoint
  tracking will be preserved under both healthy and faulty operation
  for all possible sampling rate evolutions that may be selected by
  the controller.
\end{abstract}

\section{Introduction}
\label{sec:intro}

Active Fault-Tolerant Control (FTC) systems aim to maintain control
performance levels under a number of fault scenarios, by means of a
controller reconfiguration mechanism. An interesting approach to
controller reconfiguration for FTC is the one based on the concept of
\textit{virtual actuators} (VA) (a complete reference on VA and its
applications and details can be found in
\cite{steffen_controlReconfigurations,lunze_tac06,RiL09,richter_Auto11}). The
main advantage of the VA approach is that it allows the engineer to
design the controller for the nominal (\textit{``healthy''}) plant,
without considering the possible faults. 
More specifically, the method uses a single nominal controller, designed for the
“healthy” system, which is always present in the
closed-loop system, and a virtual actuator, which introduces an interface
between the plant and the controller taking different
actions according to the evaluated fault situation of the plant. 
In healthy operation the virtual actuator is inactive and the whole
control action is provided by the nominal controller. In faulty
operation the virtual actuator generates additional signals that
combine with the existing signals in specific ways in order to cancel
or mitigate the effect of the fault in the closed-loop system. The
advantage of this approach is that any existing nominal controller
which has been designed to satisfy the desired specifications for the
fault-free plant, can be kept in the loop at all times.
In addition, the design of the virtual actuator (which has to adapt to
each type of detected fault) is independent of the controller and is
aimed at preserving specific closed-loop properties in the presence of
faults as, for example, stability and setpoint tracking.


Currently, many control systems involve some kind of shared network
environment with limited bandwidth. Such systems are usually referred
to as Networked Control Systems (NCS) (see the special issues
\cite{antbal_tac04,antbai_pieee07}). Since the network may be shared
among processes, then sampling and acting over the system while
keeping a constant rate may be difficult because it introduces a
trade-off between requiring too much bandwidth and hence restricting
other processes from accessing the network (sampling at a high rate)
or too little bandwidth and hence reducing control performance
(sampling at a low rate). Thus, much research effort has focused on
designing control strategies for systems under varying sampling rate
(VSR).

In the present paper, we consider a type of VSR where a central
controller may be in charge not only of computing feedback but also of
administering access to the shared network. In this setting, the
controller may perform on-line variations of the sampling rate in
order to accommodate for the bandwidth requirements of the different
processes being controlled. This setting is akin to that described in
\cite{cereke_rts02} and has been addressed in previous work by some of
the authors \cite{haimoose_nahs12, haimo_ifac2011,osehaimo_rpic2011,
  osehaimo_aadeca2012}. 
%
%
%
In the current work, we combine this specific type of VSR with an extension of the VA-based controller reconfiguration strategy of \cite{seron_bankVA_ifac11} to the VSR setting, yielding a control scheme as illustrated in Figure~\ref{fig:scheme}. In the proposed scheme, the VSR controller is designed for the nominal or fault-free plant, and hence knowledge of the fault scenario is not needed at the control design stage. The fault tolerance mechanism is responsible for achieving correct closed-loop control under faults. This mechanism requires knowledge of the sampling period command ($h_k$ in Figure~\ref{fig:scheme}) issued by the controller but does not modify the sampling and hold operations. The fault tolerance mechanism considered involves a bank of VAs and a fault detection and isolation (FDI) strategy. Correct closed-loop control under faults is achieved by engaging in the loop the VA from the bank of VAs that is adapted to
the fault diagnosed by the FDI unit. In this paper, we focus on the design and properties of the scheme of Figure~\ref{fig:scheme} to achieve correct setpoint tracking and stability under both nominal and faulty operation, assuming that some FDI strategy is able to correctly diagnose faults. We thus concentrate on controller reconfiguration under faults and leave the integration of an FDI unit as a topic for future work. 
We extend the existing VA-based controller reconfiguration strategy of \cite{seron_bankVA_ifac11} to the VSR case considered, ensuring closed-loop stability and setpoint tracking under nominal and faulty operation regardless of how the VSR controller performs on-line variations of the sampling rate. In this context, our main contribution is to show that the difficulties in this combined scheme are not greater as those for VSR control or uniform-sampling VA-based reconfiguration taken independently.

The proposed scheme can be particularly interesting in circumstances
in which a system with input redundancy is such that its performance
objectives can still be reached under the total loss of some actuators
using a VA technique, and where the tasks related to control and
sampling period selection can be driven by a central controller. 
\begin{figure}[htb]
  \begin{center}
    \input{./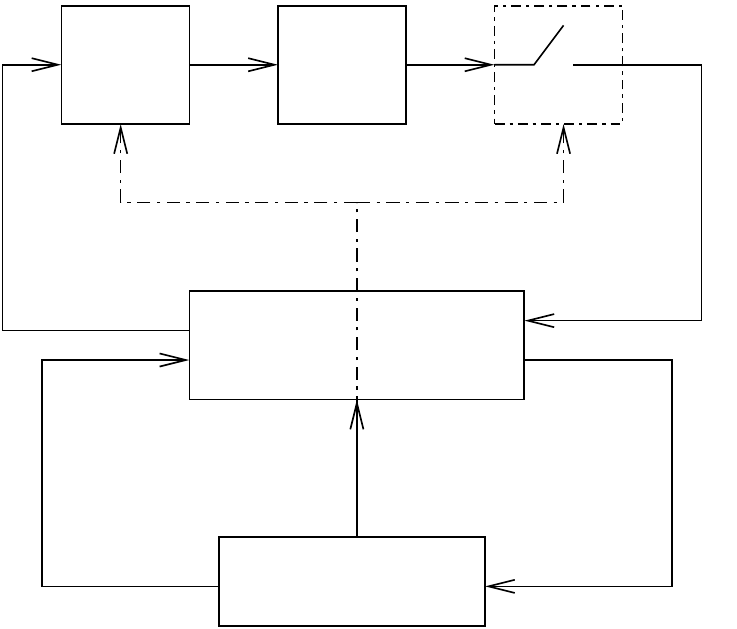_t}
    \caption{Considered scheme: a central controller controls both the
    process and the next sampling instant.}
    \label{fig:scheme}
  \end{center}
\end{figure}

The remainder of the paper is organised as follows. In Section \ref{sec:probform}, we more precisely explain and define the problem. In Section \ref{sec:VA}, we show how the controller and virtual actuators must be designed. In Section~\ref{sec:cl-props}, we provide results that establish correct closed-loop operation. A simulation example illustrating successful operation is provided in Section~\ref{sec:example} and conclusions in Section~\ref{sec:conclusions}.


\section{Problem formulation}
\label{sec:probform}
We consider actuator-fault-tolerant output-feedback control of a
continuous-time plant by means of a discrete-time controller, bank of
virtual actuators, and fault detection and isolation (FDI) unit. The
discrete-time controller is designed for the fault-free (healthy)
situation and hence knowledge of the fault scenario is not necessary
at the controller design stage. In addition to computing the control
action assuming fault-free operation, this fault-ignorant controller
is in charge of performing on-line variations of the sampling rate. In
the sequel, we explain the different components of the feedback
control system considered.

\subsection{Continuous-time plant}
\label{sec:ct-plant}

The plant model that we employ is the following:
\begin{align}
\label{eq:ctstates}
  \dot{x} &= Ax+BF {u},\\
\label{eq:ctoutput}
  {y} &= Cx, \\ 
\label{eq:ctperformance}
  v &= C_vx,
\end{align}
where $x\in\Rn$, $u\in\Rm$ is the control input, $y\in\Rn$ is the
plant measured output, and $v\in\Rq$ is a performance output. As in
\cite{seron_bankVA_ifac11} we represent the operability situation of
the $N=m$ actuators by the matrix $F\in\Rmm$, taking values from a
finite set
\begin{equation}
  \label{eq:faults}
  F\in\mathcal{F}\dfn\{F_0,F_1,\cdots,F_N\},\quad F_0 = I.
\end{equation}
Under healthy operation, the matrix $F$ in (\ref{eq:ctstates}) takes
the value $F=F_0=I$, so that $B$ in \eqref{eq:ctstates} represents the
``healthy'' plant input matrix. The matrix $F_j$, $j=1,\ldots,N$,
models the total loss of the $j$-th actuator. Hence, $F_j$ is obtained
by setting to $0$ the $j$-th diagonal entry of $F_0=I$.  
We assume that the pairs $(A,BF_i)$ are stabilisable for $i =
0,1,\cdots,N$, $(C,A)$ is detectable, and $A$ is invertible.

The performance output $v$ in (\ref{eq:ctperformance}) and the
different fault situations in (\ref{eq:faults}) must be such that for
every desired constant value $v_{ref}$ of the performance output and
every fault $i$, there exists a constant input value $\bar u_i$ so
that the equilibrium state $\bar x_i$ that corresponds to the constant
input $\bar u_i$ under fault $i$ is such that $C_v \bar x_i =
v_{ref}$, i.e. there exist $ \bar x_i$, $\bar u_i$ such that
\begin{equation}
  \label{eq:sstrack}
  \begin{bmatrix}
    A & B F_i \\C_v & 0
  \end{bmatrix}
  \begin{bmatrix}
     \bar x_i  \\\bar u_i
  \end{bmatrix} =
  \begin{bmatrix}
    0\\ v_{ref}
  \end{bmatrix}
\end{equation}
for all $i\in\{0,1,\ldots,N\}$. Condition~\eqref{eq:sstrack} means
that the plant has sufficient levels of redundancy to admit setpoint
tracking in each fault scenario. 
Note that if for a given $v_{ref}$ and specific fault $i$, no $\bar
u_i$ exists satisfying (\ref{eq:sstrack}), then the performance output
will not converge to $v_{ref}$ under fault $i$, no matter how
sophisticated the fault tolerance mechanism may be. On the other hand,
the VA fault tolerance mechanism has to be designed so that the
required equilibrium values satisfying~(\ref{eq:sstrack}) are achieved
under all possible fault situations; this will be addressed in detail
in Section~\ref{sec:va-design}.



\subsection{Fault-ignorant varying-sampling-rate controller}
\label{sec:fault-ign-contr}

As previously explained, knowledge of the fault scenario is not needed
at the controller design stage. Consequently, controller design is
independent of virtual actuator design. We consider a
healthy-plant-model-based reference-tracking sampled-data controller
given by
\begin{align}
  \label{eq:unominal}
  u_c &= -K^h(\hat{x}-x_{ref}) + u_{ref}, \\
  \label{eq:internalmod}
  \hat{x}^+ &= A^h\hat{x}+B^hu_c+L^h(y_c-C\hat{x}),%
\end{align}
where $u_c$ represents the controller-computed plant input signal,
$y_c$ is the plant output signal supplied to the controller,
$x_{ref},u_{ref}$ are state and input constant reference signals,
respectively, and $\hat{x}$, $\hat{x}^+$ are the current and successor
states of the observer~\eqref{eq:internalmod}.
The matrices $A^h$ and $B^h$ are the discrete-time
equivalents of $A$ and $B$ in (\ref{eq:ctstates}), corresponding to a sampling period $h$,
\begin{align}
  \label{eq:aibi}
  A^h \dfn e^{Ah},\quad&\: B^h\dfn \int_0^h{e^{At}Bdt},
  \\ \intertext{and the reference signals satisfy}
  \label{eq:reference}
  A x_{ref} + B u_{ref} &=0, \qquad C_v x_{ref} = v_{ref}.
\end{align}
The controller may perform on-line variations of the sampling period
$h$, under the constraint that all possible sampling periods are taken
from a finite set:
\begin{equation}
 \label{eq:hset}
 h\in \Hset \dfn \{h_1,\cdots, h_{n_s}\},
\end{equation}
where every $h\in \Hset$ should be non-pathological (see \cite{ChF95}
for further details on pathological sampling). The feedback and
observer gains $K^h$ and $L^h$ employed by the controller may depend
on the sampling period selected. The computation of these gains will
be explained in Section~\ref{sec:contr-des}. If no fault tolerance
mechanism were present, the plant input $u$ would equal the
controller-computed plant input $u_c$ and the plant output supplied to
the controller, $y_c$, would equal the true plant output, $y$, at all
sampling instants. In the presence of the fault tolerance mechanism
discussed in the current paper, the equalities $y=y_c$ and $u=u_c$
will be true only under nominal (healthy) conditions and provided the
fault tolerance mechanism accurately detects that the plant is under
healthy operation. 

\subsection{Nominal plant-controller feedback loop}
\label{sec:nom-plant-contr}

As explained previously, in the considered scenario the controller
does not only apply control, but also determines the sampling instants
(based, for example, on the states of the processes and/or
restrictions over the network). Under nominal conditions, at instant
$t_k$ the controller receives the sample $y_c$ and processes it in
order to compute the required feedback action. To do so, it also
determines the instant $t_{k+1} = t_k+h_k$, with $h_k \in
\mathcal{H}$, at which it will take the next sample and control
action. The plant dynamics at the sampling instants can be written as
\begin{align}
  \label{eq:dtsys}
  x^+ = A^hx+B^hF u,
\end{align}
where $A^h$ and $B^h$ are defined in~\eqref{eq:aibi}, $x=x(t_k)$,
$u=u(t_k)$, and $x^+=x(t_{k+1})$ (the \textit{successor} state).

We observe that if constant reference signals $x_{ref}$ and $u_{ref}$
satisfy condition \eqref{eq:reference} for the continuous-time plant, then the same $x_{ref}$ and $u_{ref}$ satisfy
\begin{equation}
  \label{eq:dtnomss}
  x_{ref} = A^h x_{ref} + B^h u_{ref}
\end{equation}
for all $h\in\Hset$. 
To see this, 
premultiply 
$B^h$ in \eqref{eq:aibi} by $A$, producing
\begin{equation}
  A B^h=A\int_0^h e^{A t}dt B = (A^h-I)B.
\end{equation}
Since $A$ is invertible and commutes with $A^h$, then for every
$h\in\mathcal{H}$
\begin{equation}
 \label{eq:idahbh}
   (I-A^h)^{-1}B^h = -A^{-1}B.
\end{equation}  
Next, we write
\begin{equation}
  \label{eq:xrefurefrelation}
    x_{ref} = -A^{-1}Bu_{ref} =  (I-A^h)^{-1}B^hu_{ref}, 
\end{equation}
from which (\ref{eq:dtnomss}) is obtained. 
%

Since the controller may perform on-line variations of the sampling period, the discrete-time system (\ref{eq:dtsys}), obtained by looking at the continuous-time plant only at the sampling instants can be regarded as a \emph{Discrete-Time Switched System}
(DTSS). Since the sequence of sampling periods selected by the controller is
arbitrary (that is, we do not require a priori knowledge on rules for
this selection), we are interested in establishing closed-loop properties (such as stability and setpoint tracking) that hold irrespective of such sequence. In switched systems terminology, we are interested in establishing closed-loop properties that hold under \emph{arbitrary switching} (see, e.g. \cite{showir_siamrev07,lin09:_stabil}). 
System design in order to achieve the required closed-loop properties under arbitrary switching will be addressed in Section~\ref{sec:VA}.


\subsection{Bank of virtual actuators}
\label{sec:bank-va}

As in \cite{seron_bankVA_ifac11}, we consider a bank of VAs where each of the VAs in the bank is designed to compensate for a specific actuator fault.
The VA corresponding to the $i$-th fault situation $F_i$ is given by
\begin{align}
 \label{eq:vadyn}
  \theta^+_i &= A^h\theta_i+B^hu_c-B^hF_iu_i, \\ 
  \label{eq:ui}
  u_i &= -M_i^h\theta_i+N_i^hu_c,\\
  \label{eq:ytheta}
  y_i &= y + C\theta_i.
\end{align}
The variable $\theta_i$ represents the $i$-th VA internal state, $u_i$
is the $i$-th VA plant input signal and $y_i$ the $i$-th VA output to
be supplied to the controller. How $u_i$ and $y_i$ relate to the true
plant input $u$ and the true output supplied to the controller $y_c$
is explained in Section~\ref{sec:sw-logic}. The 0-th VA corresponds to
nominal operating conditions (healthy or fault-free) and has
\begin{equation}
  \label{eq:mn0}
  M^h_0 = 0\quad\text{and}\quad N^h_0= I,
\end{equation}
for all sampling periods $h$. For $i=1,\ldots,N$, the VAs' internal
matrices $M_i^h$ and $N_i^h$ may depend on the sampling period $h$
selected by the controller. The design of $M_i^h$ and $N_i^h$ is
explained in Section~\ref{sec:va-design}. For future reference, note
that substitution of (\ref{eq:ui}) into (\ref{eq:vadyn}) yields
\begin{align}
  \label{eq:vadynsub}
  \theta_i^+ &= A_i^h \theta_i + B^h (I-F_i N_i^h) u_c,\quad\text{with}\\
  \label{eq:Aih}
  A_i^h &\dfn A^h+B^hF_iM_i^h.
\end{align}
that is, the dynamics of each VA is driven by the controller-computed
plant input signal, $u_c$.

\subsection{FDI and controller reconfiguration mechanism} 
\label{sec:sw-logic}

Controller reconfiguration is achieved through a selector that, in
response to the diagnosed fault situation, interconnects the
appropriate virtual actuator from the bank of virtual actuators with
the controller and the plant.

When an FDI mechanism (not described here) detects that the $j$-th
fault has occurred, the $j$-th virtual actuator is interconnected with
the controller and plant by making $u=u_j$ and $y_c=y_j$. Hence,
whenever the FDI diagnoses that the plant is under healthy operation,
the selector will set $u=u_0$ and $y_c=y_0$. The reconfiguration also
resets the 0-th virtual actuator state $\theta_0$ to zero whenever
healthy operation is detected.

Under the above considerations, we next show that if the plant is
under healthy operation and if the FDI mechanism successfully assesses
the plant's healthy condition, then the plant and controller feedback
loop will operate as if the bank of virtual actuators and
reconfiguration mechanism were not present. From (\ref{eq:ui}) and
(\ref{eq:mn0}), then $u_0 = u_c$. Therefore, if at time $k_0$ the FDI
mechanism detects that healthy operation is restored, then $\theta_0 =
0$ according to the virtual actuator state reset condition, $y_0 = y$
from (\ref{eq:ytheta}), and it follows that $y_c = y_0 = y$ and
$u=u_0=u_c$ at time $k_0$. If the plant continues under healthy
operation and the FDI mechanism continues to successfully assess the
plant's healthy condition, then from (\ref{eq:vadyn}) it follows that
$\theta_0 = 0$ and $y_c = y_0 = y$ and $u=u_0=u_c$ will continue to
hold for time instants $k\ge k_0$ until either a fault occurs or the
FDI mechanism ceases to successfully diagnose the plant's condition.


\section{Controller and virtual actuator design}
\label{sec:VA}

The controller and bank of virtual actuators must be designed so that
the performance output $v$ [see (\ref{eq:ctperformance})] is able to
track a constant reference in closed loop, even if faults occur, and
so that all closed-loop variables remain bounded.

\subsection{Controller design}
\label{sec:contr-des}


Controller design involves the appropriate selection of the matrices
$K^h$ and $L^h$ in (\ref{eq:unominal})--(\ref{eq:internalmod}). In
order for the desired closed-loop properties to hold irrespective of
the sampling period sequence selected by the VSR controller, the
matrices $K^h$ and $L^h$ should be selected so that the closed-loop
matrices
\begin{align}
  \label{eq:ahcl}
   A^{h,\cl}&\dfn A^h-B^hK^h, \\
  \label{eq:aho}
   A^{h,\ob}&\dfn A^h-L^hC,
\end{align}
make the sets $\{ A^{h,\cl} : h\in\Hset\}$ and $\{A^{h,\ob} :
h\in\Hset\}$ stable under arbitrary switching. Stability under
arbitrary switching is equivalent to the existence of a Lyapunov
function common to every matrix in the corresponding sets (see,
e.g. \cite{showir_siamrev07,lin09:_stabil}). In general, this common
Lyapunov function may be not quadratic.


If $K^h$ and $L^h$ exist so that the closed-loop matrices (\ref{eq:ahcl}) on the one hand, and (\ref{eq:aho}) on the other, share a common \emph{quadratic} Lyapunov function (CQLF) for all $h\in\Hset$, then $K^h$ and $L^h$ can be computed via linear matrix inequalities (LMIs) (see, e.g. \cite{daafouz02_stabil,sala05_vsrlmi}). 


In some cases, $K^h$ and $L^h$ can be found so that not only CQLFs
exist, but also additional properties hold for the closed-loop
matrices (\ref{eq:ahcl}) and (\ref{eq:aho}). One such case is when
invertible $T_{\cl}$ and $T_{\ob}$ exist so that $T_\cl^{-1} A^{h,\cl}
T_\cl$ and $T_\ob^{-1} A^{h,\ob} T_\ob$ are upper triangular for all
$h\in\Hset$ (solvable Lie algebra case). Several works address the
computation of $K^h$ (and $L^h$) so that this simultaneous
triangularization is achieved for $A^{h,\cl}$ (and $A^{h,O}$) for
general switched linear systems
\cite{haibra_cdc09,haibra_tac10,haimobras_2010} and specifically for
cases as the current one, where $A^h$ and $B^h$ arise from sampling a
single continuous-time systems at different rates
\cite{haimo_ifac2011,osehaimo_rpic2011,haimoose_nahs12,osehaimo_aadeca2012}. Several
facts make this apparently more restrictive design criterion appealing
in the current context because the chances of successful computation
of $K^h$ and $L^h$ increase when either:
\begin{itemize}
\item the DTSS arises from sampling a single continuous-time system at
  different rates
  \cite{haimo_ifac2011,osehaimo_rpic2011,haimoose_nahs12,osehaimo_aadeca2012},
\item the system has many inputs \cite{haibra_aucc11,haibra_rpic11,HB_TAC2013}.
\end{itemize}
Note that both these situations occur in the current case, the second
one because input redundancy is required for successful trajectory
tracking in the presence of total actuator loss, as explained in
Section~\ref{sec:ct-plant}.

\subsection{Virtual actuator features and design}
\label{sec:va-design}

The bank of VAs, as defined in (\ref{eq:vadyn})--(\ref{eq:ytheta}),
jointly with the controller reconfiguration mechanism endow the
feedback loop with specific features when the plant's fault situation
has been correctly diagnosed. One of these features is known as
``fault hiding'' because the controller variables $u_c$ and $y_c$ are
related in such a way as if a plant under nominal conditions were
connected to the controller.  In order to see this feature, define
\begin{align}
  \label{eq:xidef}
  \xi_i \dfn x+ \theta_i, \quad i = 1,\cdots,N,
\end{align}
and write, using \eqref{eq:vadyn} and \eqref{eq:dtsys},
\begin{align}
  \label{eq:xidyn}
  \xi _i^+ = A^h\xi _i+B^h(Fu-F_iu_i)+ B^hu_c.
\end{align}
When the plant fault situation is correctly diagnosed, we have
$F_i=F$, $u_i=u$ and $y_c = y_i$. From the latter equalities,
(\ref{eq:ctoutput}), (\ref{eq:ytheta}) and (\ref{eq:xidyn}), it
follows that
\begin{align}
  \label{eq:ximatched}
  \xi _i^+ &= A^h\xi _i+ B^hu_c,\\
  \label{eq:yc}
  y_c &= C\xi_i. 
\end{align}
Eqs.~(\ref{eq:ximatched})--(\ref{eq:yc}) show that the controller
effectively sees a nominal plant, whose state is $\xi_i$ instead of
$x$.

A second feature of the bank of VAs and switching mechanism is that
the desired setpoint $v_{ref}$ for the performance output $v$ defined
in~\eqref{eq:ctperformance} should be preserved for all fault
situations and sampling period variations, provided the plant fault
situation is correctly diagnosed. In closed loop, the boundedness of
all variables and the tracking of the desired setpoint $v_{ref}$ are
achieved by ensuring the following:
\begin{enumerate}[a)]
\item the controller-computed plant input $u_c$ converges to the
  steady state value $\bar u_c = u_{ref}$, \label{item:ucuref}
\item the VA state vector $\theta_i$ converges to a constant
  steady-state value $\bar\theta_i$, \label{item:xthetass}
\item Under fault $i$, the plant state $x$ and input $u$ both converge to steady-state values $\bar x_i$ and $\bar u_i$ (independent of $h$) and satisfy  (\ref{eq:sstrack}).\label{item:uibar}
\end{enumerate}
In Section~\ref{sec:cl-props} we will show that
items~\ref{item:ucuref})--\ref{item:uibar}) above will be true if we
select the matrices $M_i^h$ and $N_i^h$ as explained next.

The matrices $M_i^h$ should be selected so that 
for every $i \in
\{0,1,\ldots,N\}$, the matrices in the set $\{A_i^h : h \in \Hset\}$,
with $A_i^h$ as in (\ref{eq:Aih}), are stable under arbitrary
switching. The latter can be achieved using, for example, LMI- or
Lie-algebraic-solvability-based methods, as mentioned in
Section~\ref{sec:contr-des}, and implies that every $A_i^h$ is Schur.


Once the $M_i^h$ are designed, select one sampling period $h'\in\Hset$
and compute
\begin{align}
  \label{eq:firstnih}
  N_i^{h'} &= \big[X_i^{h'}\big]^\dagger C_v(I-A_i^{h'})^{-1}B^{h'}, \\
  \label{eq:Xih}
  X_i^h &:= C_v(I-A_i^{h})^{-1}B^{h}F_i\quad\text{for all }h\in\Hset,
\end{align}
where $^\dagger$ denotes the Moore-Penrose generalised inverse. For
every other sampling period $h\in\Hset$, select the corresponding
$N_i^h$ as follows:
\begin{align}
  \label{eq:Nselect}
  N_i^h &= N_i^{h'} - (M_i^{h'} - M_i^{h}) P_i^{h'},\quad \text{where}\\
  \label{eq:Pi}
  P_i^h &:= (I-A_i^h)^{-1}B^h(I-F_i N_i^h)\quad\text{for all }h\in\Hset.
\end{align}
The following result concerning the expression for $P_i^h$ above will
be required in Section~\ref{sec:cl-props}.
\begin{lem}
  \label{lem:imAihBh}
  Let $h\in\Hset$ and $i\in\{0,\ldots,N\}$. Then,
  \begin{equation}
    \label{eq:IdAihBh}  
    (I-A_i^{h})^{-1}B^{h}= -(A+BF_iM_i^{h})^{-1}B. 
  \end{equation}
\end{lem}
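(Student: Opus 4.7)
The plan is to leverage the identity \eqref{eq:idahbh} by treating the term $B^h F_i M_i^h$ inside $A_i^h = A^h + B^h F_i M_i^h$ as a perturbation of $A^h$. I would abbreviate $K \dfn F_i M_i^h$ and introduce $v \dfn -(A+BK)^{-1} B$, the candidate right-hand side of \eqref{eq:IdAihBh}. By definition $(A+BK) v = -B$, which rearranges to
\begin{equation*}
  v = -A^{-1} B - A^{-1} B K v.
\end{equation*}
The whole identity \eqref{eq:IdAihBh} then reduces to establishing $(I - A_i^h)\,v = B^h$, since left-multiplication by $(I-A_i^h)^{-1}$ yields $v = (I-A_i^h)^{-1} B^h$, which is the claim.

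To verify $(I-A_i^h)\,v = B^h$, I would expand $(I-A_i^h) v = (I - A^h) v - B^h K v$ and substitute the displayed formula for $v$ into the first summand. Because $A^h$ commutes with $A^{-1}$ (being a power series in $A$), and because \eqref{eq:idahbh} rearranges to $(I - A^h) A^{-1} B = -B^h$, the first summand collapses to $B^h + B^h K v$, and the trailing $-B^h K v$ cancels it to leave exactly $B^h$.

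The only auxiliary obligation is to justify invertibility of $A + BF_i M_i^h$, so that $v$ is well defined. I would obtain this from the factorisation
\begin{equation*}
  I - A_i^h = (I - A^h)\bigl(I + A^{-1} B K\bigr),
\end{equation*}
which is immediate from $(I-A^h) A^{-1} B = -B^h$. Invertibility of $I - A^h$ is implicit in \eqref{eq:idahbh}, and invertibility of $I - A_i^h$ follows from the design in Section~\ref{sec:va-design}, which makes $A_i^h$ Schur; together these force $I + A^{-1} B K$ to be invertible, so $A + BK = A(I + A^{-1} B K)$ is invertible as well. The chief obstacle is purely bookkeeping: keeping the $n\times m$ versus $n\times n$ dimensions straight and ensuring each perturbation manipulation is applied on the correct side.
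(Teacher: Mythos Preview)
Your argument is correct and rests on the same identity \eqref{eq:idahbh} as the paper, but the organisation differs. The paper works directly on the left-hand side: it factors $I-A_i^h = (I-A^h)\bigl(I-(I-A^h)^{-1}B^hF_iM_i^h\bigr)$, inverts, and then applies \eqref{eq:idahbh} twice to reach $-(A+BF_iM_i^h)^{-1}B$. You instead fix the candidate $v=-(A+BK)^{-1}B$ and verify $(I-A_i^h)v=B^h$ by splitting off the perturbation term $B^hKv$; the factorisation appears only in your side argument for the invertibility of $A+BF_iM_i^h$. The verification route avoids carrying nested inverses through the computation, and your explicit invertibility check is a genuine addition---the paper's proof simply writes $(A+BF_iM_i^h)^{-1}$ without justification. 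In substance the two proofs are equivalent; yours is marginally tidier.
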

\begin{proof}
  Using \eqref{eq:Aih} and \eqref{eq:idahbh}, it follows that
  \begin{align*}
    (I-A_i^{h})^{-1}B^{h} =(I-A^{h}-B^{h}F_iM_i^{h})^{-1}B^{h}\\
    =\big[(I-A^{h})(I-(I-A^{h})^{-1}B^{h}F_iM_i^{h})\big]^{-1}B^{h} \\
    =[A^{-1}(A+BF_iM_i^{h})]^{-1}(I-A^{h})^{-1}B^{h}\\
    =(A+BF_iM_i^{h})^{-1}A[-A^{-1}B]
  \end{align*}
  whence (\ref{eq:IdAihBh}) follows.
\end{proof}

In the next section, we show that if the $M_i^h$ and $N_i^h$ are
selected as previously explained, then items
\ref{item:ucuref})--\ref{item:uibar}) above will be ensured and the
closed-loop system will successfully track the desired setpoint
$v_{ref}$ under both nominal and faulty conditions, even when the VSR
controller performs on-line variations of the sampling period.

\section{Closed-loop properties under VSR}
\label{sec:cl-props}
In this section, we present the main results of the paper. These results are given below as Theorems~\ref{thm:ucbar}, \ref{thm:VAstateconv} and \ref{thm:1}. Each theorem establishes the validity of one of the items~\ref{item:ucuref})--\ref{item:uibar}) detailed in Section~\ref{sec:va-design}, under the design conditions and assumptions explained in Sections~\ref{sec:probform} and~\ref{sec:VA}. These
results ensure the appropriate operation of the VA for the considered
VSR case, by ensuring the boundedness of all closed-loop variables and the convergence of the performance output to the desired reference value under persistent faults.

\subsection{Control-computed plant input convergence}
\label{sec:convergence}
To proceed with our first main result, let us define the following
observer and tracking errors
\begin{align}
  \label{eq:tildexi}
    \tilde{\xi}_i &\dfn \xi_i - \hat{x}, \\
  \label{eq:zeta}
    \zeta_i &\dfn \xi_i-x_{ref},
\end{align}
with $\xi_i$ as in (\ref{eq:xidef}), and express the
controller-computed plant input $u_c$ in (\ref{eq:unominal}) as
\begin{equation}
 \label{eq:ucthetazeta}
  u_c = -K^h \zeta _i +K^h\tilde{\xi}_i+u_{ref}.
\end{equation}
We next establish item~\ref{item:ucuref}) of
Section ~\ref{sec:va-design}. This is done in the following Theorem.
\begin{thm}
  \label{thm:ucbar}
  Consider the continuous-time plant
  (\ref{eq:ctstates})--(\ref{eq:ctoutput}) with VSR controller
  (\ref{eq:unominal})--(\ref{eq:reference}) and bank of VAs
  (\ref{eq:vadyn})--(\ref{eq:mn0}). Suppose that there exist feedback
  matrices $K^h$ and observer-gain matrices $L^h$ as requested in
  \ref{sec:contr-des}. 
  If the plant's fault
  condition is persistent and successfully diagnosed by the FDI unit, then
  \begin{enumerate}[i)]
  \item the combined plant-VA state $\xi_i$ in (\ref{eq:tildexi}),
    where $i$ identifies the plant's fault condition, converges to the
    steady-state value $\bar\xi_i = x_{ref}$, and so does the observer
    state $\hat x$.\label{item:xitzetaconv}
  \item the controller-computed plant input $u_c$ converges to the
    steady-state value $\bar u_c = u_{ref}$.\label{item:ucconv}
  \end{enumerate}
\end{thm}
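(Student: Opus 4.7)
My plan is to work in the error coordinates $\tilde\xi_i$ and $\zeta_i$ defined in~\eqref{eq:tildexi}--\eqref{eq:zeta}, exploiting the fault-hiding identities~\eqref{eq:ximatched}--\eqref{eq:yc} that hold once the FDI correctly identifies the persistent fault. The strategy is to show that $(\tilde\xi_i,\zeta_i)$ obeys a block lower-triangular (cascade) switched linear recursion whose diagonal blocks are, respectively, $A^{h,\ob}$ and $A^{h,\cl}$. Since both matrix families are stable under arbitrary switching by the design hypothesis of Section~\ref{sec:contr-des}, a standard cascade argument will yield $(\tilde\xi_i,\zeta_i)\to 0$, from which items \ref{item:xitzetaconv}) and \ref{item:ucconv}) follow immediately.

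First, I would derive the observer-error recursion. Subtracting~\eqref{eq:internalmod} from~\eqref{eq:ximatched} and using $y_c=C\xi_i$ from~\eqref{eq:yc}, the term $B^h u_c$ cancels and the measurement innovation $L^h(y_c-C\hat x)$ simplifies to $L^h C\tilde\xi_i$, yielding
\begin{equation*}
  \tilde\xi_i^+ = (A^h-L^hC)\,\tilde\xi_i = A^{h,\ob}\,\tilde\xi_i.
\end{equation*}
Next, I would compute the tracking-error recursion. Using $\xi_i^+=A^h\xi_i+B^hu_c$ and substituting the expression~\eqref{eq:ucthetazeta} for $u_c$ gives
\begin{equation*}
  \zeta_i^+ = A^{h,\cl}\,\zeta_i + B^hK^h\,\tilde\xi_i + (A^hx_{ref}+B^hu_{ref}-x_{ref}).
\end{equation*}
The parenthesised term vanishes by the discrete-time steady-state identity~\eqref{eq:dtnomss}, which was shown in Section~\ref{sec:nom-plant-contr} to hold for every $h\in\Hset$ as a consequence of~\eqref{eq:reference}. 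Hence
\begin{equation*}
  \zeta_i^+ = A^{h,\cl}\,\zeta_i + B^hK^h\,\tilde\xi_i.
\end{equation*}

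This leaves the cascade step, which I expect to be the main (though still standard) technical point. Since $\{A^{h,\ob}:h\in\Hset\}$ is stable under arbitrary switching and $\Hset$ is finite, $\tilde\xi_i$ decays to zero uniformly exponentially, irrespective of the sampling-period sequence chosen on-line by the controller. Because $\{B^hK^h:h\in\Hset\}$ is also a finite and hence bounded family of matrices, the perturbation $w_k\dfn B^hK^h\tilde\xi_i$ driving the $\zeta_i$-recursion is a vanishing input. Since $\{A^{h,\cl}:h\in\Hset\}$ is stable under arbitrary switching, the $\zeta_i$-subsystem is ISS with respect to $w_k$ (this can be proved via a common Lyapunov function for $A^{h,\cl}$, or by a direct variation-of-constants bound combined with the uniform exponential decay of the transition matrix of the switched system); a vanishing input therefore forces $\zeta_i\to 0$. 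Item \ref{item:xitzetaconv}) then follows, because $\zeta_i\to 0$ means $\xi_i\to x_{ref}$, and $\hat x=\xi_i-\tilde\xi_i\to x_{ref}$. Item \ref{item:ucconv}) follows from the boundedness of $\{K^h:h\in\Hset\}$ applied to~\eqref{eq:unominal}: since $\hat x-x_{ref}\to 0$, we obtain $u_c\to u_{ref}$.
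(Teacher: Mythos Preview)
Your proposal is correct and follows essentially the same approach as the paper: both derive the identical cascade recursion $\tilde\xi_i^+ = A^{h,\ob}\tilde\xi_i$, $\zeta_i^+ = A^{h,\cl}\zeta_i + B^hK^h\tilde\xi_i$ from the fault-hiding identities and the reference relation~\eqref{eq:dtnomss}, and then invoke stability under arbitrary switching of the diagonal blocks to conclude $(\tilde\xi_i,\zeta_i)\to 0$. Your explicit ISS/cascade justification for the second limit is slightly more detailed than the paper's statement of~\eqref{eq:lim0}, but the underlying argument is the same.
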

\begin{proof}
  The equality (\ref{eq:ucthetazeta}) is valid for all
  $i\in\{0,\cdots,N\}$.  Using
  \eqref{eq:unominal}--\eqref{eq:internalmod},
  \eqref{eq:xrefurefrelation}, \eqref{eq:xidef}, and \eqref{eq:yc}, we
  obtain
  \begin{align}
    \label{eq:ximod}
      \tilde{\xi}^+_i &= A^h \tilde{\xi}_i+B^h(Fu-F_iu_i)-L^h(y_c-C\hat{x})\\
    \label{eq:zetamod}
    \zeta_i^+ &= (A^h-B^hK^h)\zeta_i +
      B^h(Fu-F_iu_i)+ B^hK^h\tilde{\xi}_i.
  \end{align}
  By hypothesis, the FDI unit correctly diagnoses the plant's fault
  condition and hence interconnects the $i$-th VA with the controller and plant ($u=u_i$, $F=F_i$, $y_c = y_i$). The error dynamics 
  \eqref{eq:ximod}--\eqref{eq:zetamod} hence become
  \begin{align}
    \label{eq:ximodsel}
    \tilde{\xi}^+_i &= (A^h-L^hC)\tilde{\xi}_i,\\
    \label{eq:zetamodsel}
    \zeta_i^+ &= (A^h-B^hK^h)\zeta_i+B^hK^h\tilde{\xi}_i.
  \end{align}
  Since both $A^h - L^h C$ and $A^h-B^hK^h$ are stable under arbitrary
  switching, then
  \begin{equation}
    \label{eq:lim0}
    \lim_{k\to\infty} \tilde\xi_i = 0\quad\text{and}\quad
    \lim_{k\to\infty} \zeta_i = 0,
  \end{equation}
  which establishes \ref{item:xitzetaconv}). From
  \eqref{eq:ucthetazeta}, then
  \begin{equation}
    \label{eq:ucbaruref}
    \bar u_c = \lim_{k\to\infty} u_c = u_{ref},
  \end{equation}
  which establishes \ref{item:ucconv}). Note that both \eqref{eq:lim0}
  and \eqref{eq:ucbaruref} are true for every possible evolution of
  the sampling periods $h\in\Hset$ (even when varied on-line).
\end{proof}

\subsection{VA state convergence}
\label{sec:const-ss-vsr}
The convergence of the VA state, as per item~\ref{item:xthetass}) of
Section~\ref{sec:va-design}, is established in
Theorem~\ref{thm:VAstateconv} below. We require the following
auxiliary result.
\begin{lem}
  \label{lem:nihprops}
  Consider the matrices $X_i^h$ as defined in \eqref{eq:Xih}. Suppose
  that the continuous-time system matrices $A$, $B$ are such that
  $(A,BF_i)$ are stabilisable for $i = 0,1,\cdots,N$, and for each
  fault matrix $F_i$ there exist constant values $\bar x_i$ and $\bar
  u_i$ satisfying (\ref{eq:sstrack}). Then, $X_i^h \big[ X_i^h
  \big]^\dagger = I$.
\end{lem}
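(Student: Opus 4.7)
The plan is to show that $X_i^h$ has full row rank, which will immediately give $X_i^h[X_i^h]^\dagger = I$ since the Moore–Penrose pseudoinverse of a full-row-rank matrix is a right inverse (the product being the orthogonal projector onto the range, which is everything).

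First, I would rewrite $X_i^h$ in a more tractable form using Lemma~\ref{lem:imAihBh}, obtaining
\begin{equation*}
 X_i^h = -C_v\,(A+BF_iM_i^h)^{-1}BF_i.
\end{equation*}
Then I would apply a Woodbury/push-through manipulation to ``pull $F_i$ to the right of a common invertible factor.'' Concretely, since $A$ is invertible and $(A+BF_iM_i^h)$ is invertible (this is guaranteed by $A_i^h$ being Schur, as used implicitly in Lemma~\ref{lem:imAihBh}), the standard matrix identity $(A+BG)^{-1}B = A^{-1}B(I+GA^{-1}B)^{-1}$ with $G=F_iM_i^h$, followed by the push-through identity $(I+F_iM_i^hA^{-1}B)^{-1}F_i = F_i(I+M_i^hA^{-1}BF_i)^{-1}$, yields
\begin{equation*}
 X_i^h = -C_v\,A^{-1}BF_i\,\bigl(I+M_i^h A^{-1}BF_i\bigr)^{-1}.
\end{equation*}
The factor on the right is invertible (its invertibility is equivalent, via Sylvester's determinant identity, to that of $A+BF_iM_i^h$, already established), so the row rank of $X_i^h$ equals the row rank of $C_v A^{-1}BF_i$.

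Next, I would use the standing assumption~\eqref{eq:sstrack} together with the invertibility of $A$ to show that $C_vA^{-1}BF_i$ has full row rank. Since $A$ is invertible, the top row of~\eqref{eq:sstrack} gives $\bar x_i = -A^{-1}BF_i\bar u_i$, and substituting into the bottom row yields $v_{ref} = -C_v A^{-1}BF_i\,\bar u_i$. Because condition~\eqref{eq:sstrack} is assumed to hold \emph{for every} $v_{ref}\in\Rq$, the linear map $\bar u_i\mapsto -C_vA^{-1}BF_i\bar u_i$ is surjective onto $\Rq$, i.e., $C_vA^{-1}BF_i$ has full row rank $q$. Combined with the previous paragraph, $X_i^h$ has full row rank, which concludes the argument.

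I do not expect any serious obstacle: the main technical step is the push-through identity, which is a standard algebraic manipulation, and its only subtlety is to justify invertibility of $I+M_i^hA^{-1}BF_i$, which follows by Sylvester's determinant identity from the invertibility of $A+BF_iM_i^h$ already used in Lemma~\ref{lem:imAihBh}. The conceptual content is entirely captured by the observation that applying state feedback inside the VA (the term $BF_iM_i^h$) cannot shrink the image of the steady-state map $C_vA^{-1}BF_i$, whose surjectivity is precisely what the tracking hypothesis~\eqref{eq:sstrack} encodes.
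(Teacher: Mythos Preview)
Your argument is correct and reaches the same conclusion (full row rank of $X_i^h$) by a genuinely different route than the paper. The paper first invokes a non-pathological-sampling result (Lemma~IV.3 of \cite{jemdav_tac03}) to transfer the continuous-time rank condition $\mathrm{rank}\begin{bmatrix} -A & BF_i\\ -C_v & 0\end{bmatrix}=n+q$ to its discrete-time counterpart $\mathrm{rank}\begin{bmatrix} I-A^h & B^hF_i\\ -C_v & 0\end{bmatrix}=n+q$, and then performs a block Schur-complement manipulation in discrete time to isolate $X_i^h$. You instead use Lemma~\ref{lem:imAihBh} to rewrite $X_i^h$ directly in terms of the continuous-time data, then factor out an invertible $m\times m$ matrix via the push-through identity, reducing the question to the surjectivity of $C_vA^{-1}BF_i$, which you read off immediately from~\eqref{eq:sstrack}. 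Your route is more self-contained (it avoids the external sampling lemma) and makes transparent why the VA feedback $M_i^h$ cannot destroy the rank; the paper's route, on the other hand, highlights that the result is really a discrete-time servomechanism rank condition and would generalise more readily if $A$ were singular.
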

\begin{proof} The existence of constant values $\bar x_i$ and $\bar
  u_i$ satisfying~\eqref{eq:sstrack} is equivalent to the condition
  that the matrix $ \begin{bmatrix} -A & B F_i \\-C_v & 0
  \end{bmatrix}$ has rank $n+q$.  Under non-pathological sampling $h
  \in \mathcal{H}$ the latter rank condition implies (see, e.g., the
  proof of Lemma~IV.3 in~\cite{jemdav_tac03})\footnote{For clarity, in
    this proof we use a subindex to indicate the dimensions of the
    identity matrices, that is, $I_n$ denotes the $n\times n$ identity
    matrix.}
  \begin{equation}
    \label{eq:1}
    \mathrm{rank} \, \begin{bmatrix} I_n-A^h & B^h F_i \\-C_v & 0
  \end{bmatrix} = n+q.
  \end{equation}
  Correct design of $M_i^h$ (recall Section~\ref{sec:va-design})
  implies that $A_i^h$ defined in~\eqref{eq:Aih} is Schur; then
  $(I_n-A_i^h)$ is invertible and we can write
  \begin{multline}
    \label{eq:2}
    \begin{bmatrix}
      I_n& 0 \\ C_v (I_n-A_i^h)^{-1} & I_q
    \end{bmatrix} \begin{bmatrix} I_n  - A^h & B^h F
_i \\-C_v & 0
  \end{bmatrix}
  \begin{bmatrix}
    I_n & 0 \\-M_i^h & I_q
  \end{bmatrix} = \\
  \begin{bmatrix}
    I_n-A_i^h & B^h F_i \\ 0 & X_i^h
  \end{bmatrix},
  \end{multline}
  where $X_i^h \in \mathbb{R}^{q\times m}$ is defined in
  \eqref{eq:Xih}.  Since the first and third matrices on the left hand
  side (LHS) of \eqref{eq:2} are invertible, it follows (using
  Sylvester's inequality and properties of the matrix rank) that the
  rank of the second matrix on the LHS is equal to the rank of the
  matrix on the right hand side of \eqref{eq:2}.  Using~\eqref{eq:1}
  we then have $\mathrm{rank} \, X_i^h =q$, that is, $X_i^h$ has full
  row rank.  Thus, its Moore-Penrose generalised inverse
  $[X_i^h]^\dagger = [X_i^h]^T\big[X_i^h [X_i^h]^T\big]^{-1}$ exists
  and satisfies $X_i^h \big[ X_i^h \big]^\dagger = I_q$. The result
  then follows.
\end{proof}
\vspace{5mm} 

We are now ready to establish item~\ref{item:xthetass})
of Section \ref{sec:va-design}.
\begin{thm}
  \label{thm:VAstateconv}
    Under the hypotheses of Theorem~\ref{thm:ucbar}, consider the
    performance output \eqref{eq:ctperformance}, and
    suppose that for each fault matrix $F_i$, $i=0,1,\ldots,N$, there
    exist constant values $\bar x_i$ and $\bar u_i$ satisfying
    \eqref{eq:sstrack}, and matrices $M_i^h$ so that $\{A_i^h :
    h\in\Hset\}$, with $A_i^h$ as in \eqref{eq:Aih}, is stable under
    arbitrary switching. If $N_i^h$ are selected as explained in
    Section~\ref{sec:va-design} and if the plant's fault condition is
    persistent and successfully diagnosed by the FDI unit, then
    \begin{equation}
      \label{eq:limtheta}
      \lim_{k\to\infty} \theta_i = \bar\theta_i\quad\text{and}\quad C_v\bar\theta_i = 0.
    \end{equation}
\end{thm}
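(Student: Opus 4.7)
The plan is to work directly with the substituted VA dynamics \eqref{eq:vadynsub}, namely $\theta_i^+ = A_i^h \theta_i + B^h(I-F_iN_i^h)u_c$, and to exploit the convergence $u_c\to u_{ref}$ already established in item~ii) of Theorem~\ref{thm:ucbar}. I would first propose a candidate steady state by solving the fixed-point equation $\bar\theta_i = A_i^h \bar\theta_i + B^h(I-F_iN_i^h)u_{ref}$ for any single $h$, which yields $\bar\theta_i = P_i^h u_{ref}$ with $P_i^h$ as in \eqref{eq:Pi}. Because the sampling sequence selected by the controller is arbitrary, the existence of a single, well-defined limit demands that $P_i^h u_{ref}$, and ideally $P_i^h$ itself, not depend on $h\in\Hset$.

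Establishing the $h$-independence of $P_i^h$ is the main obstacle, and it is precisely what the design rule \eqref{eq:Nselect} is tailored to guarantee. Fixing the anchor period $h'\in\Hset$ used in \eqref{eq:firstnih}--\eqref{eq:Nselect}, I would substitute $N_i^h = N_i^{h'}-(M_i^{h'}-M_i^h)P_i^{h'}$ into $B^h(I-F_iN_i^h)$, apply Lemma~\ref{lem:imAihBh} to rewrite $(I-A_i^h)^{-1}B^h$ as $-(A+BF_iM_i^h)^{-1}B$, and use the analogous identity $B(I-F_iN_i^{h'}) = -(A+BF_iM_i^{h'})P_i^{h'}$ to collapse the expression to
\begin{equation*}
P_i^h \;=\; (A+BF_iM_i^h)^{-1}\bigl[A+BF_iM_i^{h'}-BF_i(M_i^{h'}-M_i^h)\bigr]P_i^{h'} \;=\; P_i^{h'}.
\end{equation*}
Hence $\bar\theta_i \dfn P_i^{h'} u_{ref}$ is a legitimate, sampling-period-independent candidate limit.

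With $\bar\theta_i$ defined, I would introduce the translated error $\tilde\theta_i \dfn \theta_i-\bar\theta_i$. Subtracting the (now $h$-uniform) fixed-point identity from \eqref{eq:vadynsub} gives the switched linear recursion $\tilde\theta_i^+ = A_i^h \tilde\theta_i + B^h(I-F_iN_i^h)(u_c - u_{ref})$. By hypothesis $\{A_i^h:h\in\Hset\}$ is stable under arbitrary switching and so admits a common Lyapunov function; the driving term is a product of matrices taken from the finite (hence bounded) family $\{B^h(I-F_iN_i^h):h\in\Hset\}$ with the exponentially decaying signal $u_c-u_{ref}$ obtained in the proof of Theorem~\ref{thm:ucbar}. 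A standard ISS-type estimate for switched linear systems with a common Lyapunov function then yields $\tilde\theta_i\to 0$, i.e.\ $\theta_i\to\bar\theta_i$.

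Finally, to verify $C_v\bar\theta_i = 0$, I would evaluate at $h=h'$ and expand
\begin{equation*}
C_v\bar\theta_i \;=\; C_v(I-A_i^{h'})^{-1}B^{h'} u_{ref} \;-\; X_i^{h'} N_i^{h'} u_{ref},
\end{equation*}
using the definition of $X_i^h$ in \eqref{eq:Xih}. Substituting \eqref{eq:firstnih} and invoking Lemma~\ref{lem:nihprops} to obtain $X_i^{h'}[X_i^{h'}]^\dagger = I$ makes the two terms cancel, delivering the desired $C_v\bar\theta_i = 0$. The delicate point here is simply that the hypotheses of Lemma~\ref{lem:nihprops} (stabilisability of the $(A,BF_i)$ and existence of $\bar x_i,\bar u_i$ satisfying \eqref{eq:sstrack}) match exactly the standing assumptions of the theorem.
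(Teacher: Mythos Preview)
Your proposal is correct and follows essentially the same approach as the paper's own proof: both establish $h$-independence of $P_i^h$ via Lemma~\ref{lem:imAihBh} and the design rule~\eqref{eq:Nselect}, pass to the shifted error $\theta_i-\bar\theta_i$ and invoke stability under arbitrary switching together with $u_c\to u_{ref}$, and then verify $C_v\bar\theta_i=0$ using \eqref{eq:firstnih}--\eqref{eq:Xih} and Lemma~\ref{lem:nihprops}. The only cosmetic difference is that the paper proves the slightly stronger matrix identity $C_vP_i=0$ rather than applying the argument to $P_i u_{ref}$ directly.
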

\begin{proof}
  From Theorem~\ref{thm:ucbar}-\ref{item:ucconv}), we know that $\bar
  u_c = u_{ref}$. Let $\bar\theta_i^h$ denote the equilibrium value of
  the VA state $\theta_i$ if a constant sampling period $h$ were kept
  by the controller. Solving from (\ref{eq:vadynsub}) and using
  (\ref{eq:Pi}), we can write
  \begin{equation}
    \label{eq:thetaih}
    \bar\theta_i^h = P_i^h u_{ref}.
  \end{equation}
  We next show that $P_i^h$ is independent of $h$.  Let $h'\in\Hset$
  be the sampling period selected for the computation of $N_i^{h'}$ as
  in (\ref{eq:firstnih}). Using \eqref{eq:Pi} and
  Lemma~\ref{lem:imAihBh}, we can write
  \begin{equation}
    \label{eq:newPi}
    P_i^{h} = -(A+BF_iM_i^{h})^{-1} [B-BF_i N_i^{h}]
  \end{equation}
  for all $h\in\Hset$. Replacing $N_i^{h}$ by the expression
  \eqref{eq:Nselect}, adding $-AP_i^{h'}+AP_i^{h'}$ inside the square
  brackets, and operating, yields
  \begin{multline}
   \label{eq:Pih23}
     P_i^{h} = -(A+BF_iM_i^{h})^{-1} \\
     \big[B(I-F_i N_i^{h'}) - (A+BF_i M_i^{h}) P_i^{h'} +
     \\(A+BF_iM_i^{h'})P_i^{h'}\big].
  \end{multline}
  Using \eqref{eq:newPi}, then $(A+BF_i M_i^{h'})P_i^{h'} =
  -B(I-F_iN_i^{h'})$. Using the latter expression in (\ref{eq:Pih23})
  yields
   \begin{align*}
     P_i^{h} = -(A+BF_iM_i^{h})^{-1} [- (A+BF_i M_i^{h})P_i^{h'}] = P_i^{h'},
  \end{align*}
  which establishes that $P_i^h$ is independent of $h$. We can thus
  write $P_i^h = P_i$ for all $h\in \Hset$. Therefore, the
  steady-state value $\bar\theta_i^h$ also is independent of $h$, as
  follows from (\ref{eq:thetaih}), and we can write $\bar\theta_i^h =
  \bar\theta_i$ for all $h\in\Hset$. Define the incremental variables
  \begin{align}
    \label{eq:deltatheta}
     \Delta \theta_i &:=  \theta_i - \bar\theta_i \\
     \Delta u_c &:= u_c - \bar u_c = u_c - u_{ref}.
  \end{align}
  Using (\ref{eq:vadynsub}), the VA dynamics in the incremental
  variables can be written as
  \begin{equation*}
    \Delta\theta_i^+ = A_i^h \Delta\theta_i + B^h (I-F_i N_i^h) \Delta u,
  \end{equation*}
  where $\Delta u \to 0$ by Theorem~\ref{thm:ucbar}, and $\{A_i^h : h
  \in \Hset\}$ are stable under arbitrary switching for every
  $i=0,\ldots,N$. It follows that $\Delta\theta_i \to 0$ and hence
  $\lim_{k\to\infty} \theta_i = \bar\theta_i$. Using (\ref{eq:Pi}) and
  (\ref{eq:firstnih})--(\ref{eq:Xih}), we can write
  \begin{equation*}
    C_v P_i = C_v P_i^{h'} = \left(I - X_i^h \big[X_i^h\big]^\dagger \right) C_v (I-A_i^h)^{-1} B^h.
  \end{equation*}
  Using Lemma~\ref{lem:nihprops}, then $C_v P_i = C_v P_i^h = 0$ for
  all $h\in\Hset$. Recalling (\ref{eq:thetaih}), then $C_v
  \bar\theta_i = C_v \bar\theta_i^h = C_v P_i u_{ref} = 0$.
\end{proof}

Theorem~\ref{thm:VAstateconv} shows that the virtual actuator state
converges to a constant steady-state value that is independent of the
sampling periods $h \in \mathcal{H}$ and, in addition, is in the null
space of the performance output matrix $C_v$
[see~\eqref{eq:ctperformance}]. This property is key to achieving the
correct setpoint $v_{ref}$ for the performance output $v$, a property
that is established in the following section.

\subsection{Setpoint tracking}
\label{sec:tracking}
We next present our last result, which establishes
item~\ref{item:uibar}) of Section \ref{sec:va-design}, related to the
setpoint tracking property of the VSR VA introduced.
\begin{thm}
 \label{thm:1}
 Under the same hypotheses as for Theorem~\ref{thm:VAstateconv}, the
 plant state $x$ and the performance output $v$ will satisfy
 \begin{equation*}
   \lim_{k\to\infty} x = \bar x_i \quad\text{and}\quad
   \lim_{k\to\infty} v = v_{ref}.
 \end{equation*}
\end{thm}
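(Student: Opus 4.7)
The plan is to derive Theorem~\ref{thm:1} as a direct consequence of Theorems~\ref{thm:ucbar} and~\ref{thm:VAstateconv} by exploiting the change of variables $\xi_i = x + \theta_i$ from (\ref{eq:xidef}). Since the two previous theorems give the limits of $\xi_i$ and $\theta_i$ individually, the limit of $x$ follows by subtraction, and the limit of $v = C_v x$ then follows by applying $C_v$ and using the orthogonality $C_v \bar\theta_i = 0$ established in~\eqref{eq:limtheta}. The main obstacle, identifying the precise value of $\bar x_i$ with a solution of~\eqref{eq:sstrack}, has essentially already been solved inside Theorem~\ref{thm:VAstateconv}, so the argument is largely algebraic.

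First I would invoke Theorem~\ref{thm:ucbar}-\ref{item:xitzetaconv}) to get $\lim_{k\to\infty}\xi_i = x_{ref}$, and Theorem~\ref{thm:VAstateconv} to get $\lim_{k\to\infty}\theta_i = \bar\theta_i$ with $C_v\bar\theta_i = 0$. Solving~\eqref{eq:xidef} for $x$ and taking limits yields
\begin{equation*}
  \lim_{k\to\infty} x = \lim_{k\to\infty} (\xi_i - \theta_i) = x_{ref} - \bar\theta_i =: \bar x_i.
\end{equation*}
Applying $C_v$ and using $C_v x_{ref} = v_{ref}$ from~\eqref{eq:reference} together with $C_v \bar\theta_i = 0$ immediately gives $\lim v = C_v \bar x_i = v_{ref}$, which is the second claim of the theorem.

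To confirm that the $\bar x_i$ so defined coincides with a solution of~\eqref{eq:sstrack} (so that the first claim matches the hypothesis notation), I would proceed as follows. Fixing any $h\in\Hset$, the VA dynamics~\eqref{eq:vadyn} in steady state with $u_c \to u_{ref}$ and $\theta_i\to\bar\theta_i$ yield
\begin{equation*}
  (I-A^h)\bar\theta_i = B^h u_{ref} - B^h F_i \bar u_i^h,
\end{equation*}
where $\bar u_i^h := -M_i^h\bar\theta_i + N_i^h u_{ref}$ is the corresponding steady-state plant input. Premultiplying by $(I-A^h)^{-1}$ and invoking \eqref{eq:idahbh} converts this into $-A^{-1}B u_{ref} + A^{-1}B F_i \bar u_i^h = \bar\theta_i$. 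Using $x_{ref} = -A^{-1}B u_{ref}$ from~\eqref{eq:xrefurefrelation} and rearranging produces $A(x_{ref} - \bar\theta_i) + B F_i \bar u_i^h = 0$, i.e.\ $A\bar x_i + BF_i \bar u_i^h = 0$. Together with $C_v\bar x_i = v_{ref}$, this shows that $(\bar x_i, \bar u_i^h)$ satisfies~\eqref{eq:sstrack}, consistently for every $h\in\Hset$.

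The only subtle point is that, in the VSR setting, the computed steady-state input $\bar u_i^h$ may depend on $h$; the key observation (already guaranteed by Theorem~\ref{thm:VAstateconv}) is that $\bar\theta_i$ does \emph{not} depend on $h$, and therefore neither does $\bar x_i = x_{ref} - \bar\theta_i$. Thus the convergence $\lim_{k\to\infty} x = \bar x_i$ and $\lim_{k\to\infty} v = v_{ref}$ holds for every admissible sampling-rate sequence, completing the proof.
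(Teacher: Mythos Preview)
Your proof is correct and follows essentially the same route as the paper: invoke Theorem~\ref{thm:ucbar}-\ref{item:xitzetaconv}) for $\xi_i\to x_{ref}$, Theorem~\ref{thm:VAstateconv} for $\theta_i\to\bar\theta_i$ with $C_v\bar\theta_i=0$, subtract via~\eqref{eq:xidef} to get $x\to\bar x_i:=x_{ref}-\bar\theta_i$, and apply $C_v$ using~\eqref{eq:reference}. Your additional verification that $(\bar x_i,\bar u_i^h)$ satisfies~\eqref{eq:sstrack} goes slightly beyond what the paper's proof does explicitly, but is a welcome clarification rather than a different method.
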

\begin{proof}
  From Theorem~\ref{thm:ucbar}-\ref{item:xitzetaconv}), the combined
  plant and VA state $\xi_i$ converges to the steady-state value
  $x_{ref}$ and From Theorem~\ref{thm:VAstateconv}, the VA state
  $\theta_i$ converges to $\bar\theta_i$. Recalling (\ref{eq:xidef}),
  then the plant state $x$ must converge to the steady-state value
  $\bar x_i = x_{ref} - \bar\theta_i$. The performance output thus
  satisfies
  \begin{equation*}
    \lim_{k\to\infty} v = C_v \bar x_i = C_v x_{ref} - C_v \bar
    \theta_i = v_{ref},
  \end{equation*}
  where we have used (\ref{eq:reference}) and (\ref{eq:limtheta}).
\end{proof}
The above result shows that, if the correct fault situation has been
diagnosed and the matching VA has been engaged in the closed-loop
system, then the VA-reconfigured system will achieve the desired
constant setpoint tracking, irrespective of the selected sampling
periods $h \in \mathcal{H}$.


\section{Example}
\label{sec:example}
As an application of the proposed strategy, we revisit the two tanks
example presented in \cite{steffen_controlReconfigurations}. The
considered plant is composed of two interconnected tanks A and B,
where the objective is to control the outflow of tank B, using as
control input the inflow of tank A and the opening of the valve
between them.

The control objective is to keep a constat referenced outflow from
tank B. In the linearised model, controlling the level of tank B is
equivalent to controlling its outflow. Thus, the level of tank B will
be the considered objective. The linearised plant equations are given
by
\begin{align}
  \dot{x} &= Ax+BFu\\
  y &= x, \quad  v = C_vx
\end{align}
where 
\begin{align}
  A\dfn 
  \begin{pmatrix}
    -0.25 & 0\\0.25 & -0.25
  \end{pmatrix},\quad & B \dfn
  \begin{pmatrix}
    1 & -0.5\\0& 0.5
  \end{pmatrix}\\
  C_v \dfn&
  \begin{pmatrix}
    0 & 1
  \end{pmatrix}
\end{align}
In this example, we will consider only the loss of actuator $u_2$
(connecting valve blocked in nominal position), referred as Fault type
2 in \cite{steffen_controlReconfigurations}. Thus,
\begin{equation}
  F\in\mathcal{F} \dfn \left\{
  \begin{pmatrix}
    1 & 0\\ 0 & 1
  \end{pmatrix},\:   \begin{pmatrix}
    1 & 0\\ 0 & 0
  \end{pmatrix} \right\}.
\end{equation}
 The considered sampling periods set is given by
\begin{equation}
  \mathcal{H} \dfn \{ 0.1, \:0.05,\: 0.025\}.
\end{equation}
Note that, while each of the considered sampling periods are multiple
of $h_3$, this is only for simplicity and not required by the proposed
strategy. Using \cite[Algorithm 1]{haimobras_2010}, we are able to
compute feedback matrices such that the closed loop share a common
triangularizing transformation. The algorithm uses a procedure that
computes, if possible, common eigenvectors with stability. To do so,
it requires the definition of two auxiliary values (in this case,
selected as $\epsilon_c = \epsilon_d = 10^{-18}$), the first
associated with the stability limits and the second preventing of
selecting eigenvectors in the image of the input matrix. In this
example, we computed the set of feedback matrices for both, the
controller ($K^h$) and the VA ($M_2^h$). Computation of such matrices
using the corresponding representations of the matrix pairs
$(A^h,B^hF_i)$ yields
\begin{align}
  K^{h_1} = 
  \left(\begin{smallmatrix}  
    9.99  &  9.75\\
    -6.14\times 10^{-2} &  -5.99\times 10^{-2}
  \end{smallmatrix} \right),\\
 K^{h_2} =\left(
\begin{smallmatrix}
  19.99 &  19.75\\
  -6.19\times 10^{-2} &  -6.12\times 10^{-2}
\end{smallmatrix} \right)\\
K^{h_3} =\left(
\begin{smallmatrix}
  39.99  & 39.75\\
  -6.21\times 10^{-2}  & -6.18\times 10^{-2}
\end{smallmatrix} \right)
\end{align}
and matrices $M_2^{h}$
\begin{align}
M_2^{h_1} =-\left(
\begin{smallmatrix}
   11.23&  107.99\\
         0&         0
\end{smallmatrix} \right),& \quad
M_2^{h_2} =-\left(
\begin{smallmatrix}
   21.34 & 233.18\\
         0 &        0
\end{smallmatrix} \right)\\
M_2^{h_3} &=-\left(
\begin{smallmatrix}
   41.39 &485.57\\
    0& 0\end{smallmatrix} \right).
\end{align}
Note that the subindex in $M_i^h$ and $N_i^h$ reffers to the
actuator's fault index considered. Using Eq. \eqref{eq:firstnih} for
$h_1$, we can compute $N_2^{h_1}$ and $P_2$. Then, we are able to
compute matrices $N_2^{h_3}$ and $N_2^{h_3}$ using
\eqref{eq:Nselect}. Their computed values are as follows.
  \begin{align}
    N_2^{h_1} &=\left(
\begin{smallmatrix}
    1.00  & 22.46\\
    0       &  0
\end{smallmatrix} \right)\\
    N_2^{h_2}  &=\left(
\begin{smallmatrix}
    1.00&    2.25\\
    0     &    0\end{smallmatrix} \right)\\
    N_2^{h_3}  &=\left(
\begin{smallmatrix}
    1.00&  -37.86\\
    0   &      0\end{smallmatrix} \right)
  \end{align}
  Since $C=I$, we can choose $L^h = A^h$.

  In order to test the reference tracking properties, we change the
  reference from $x_{2,ref} = 0$ to $x_{2,ref} = 0.05$, and while the
  plant is reaching the new setpoint, we simulate the considered
  actuator fault; after a second setpoint change for $x_2$ from 0.05
  to 0, we simulate the actuator restitution to the healthy
  situation. The resulting responses are shown in Figure
  \ref{fig:example}. Observe that, in both situations the controller
  always tracks the reference provided the VA matches the fault. Also
  observe that, since during the fault the only active actuator is the
  pump, then the only way to keep the level of the second tank at 0.05
  is by increasing $x_1$, the level in the first tank, as shown by the
  blue curve in the top plot of Figure \ref{fig:example}. The fault
  index and the selected sampling period index variation are shown in
  the bottom plot of Figure \ref{fig:example}. We used \cite[Algorithm
    1]{haimobras_2010} setting a closed loop eigenvalue at zero, and
  hence, the response may be aggressive and differ from the one in
  \cite{steffen_controlReconfigurations}.
\begin{figure}[htb]
  \begin{center}
    \input{./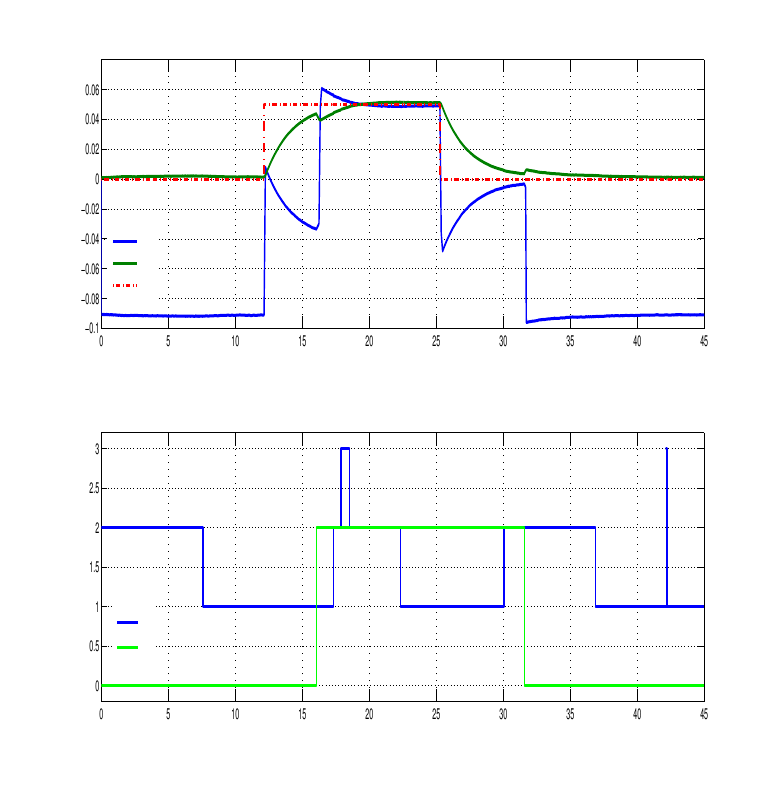_t}
    \caption{Tank levels with a fault in the second actuator occurring
      while the states track a setpoint change, and restitution to the
      healthy situation while the setpoint change is reversed (top
      plot). Sequence of active sampling period index and fault
      indexes (bottom plot).}
    \label{fig:example}
  \end{center}
\end{figure}

\section{Conclusions}
\label{sec:conclusions}
In this paper we have presented a new approach for the virtual
actuator technique under varying sampling rate control systems.  In
this approach, the controller is in charge of both providing the
control action and administering the sampling periods, taken from a
finite set. The considered fault scenario consists of abrupt actuator
outages and we have assumed that correct fault detection and isolation
is provided externally.
The main results of this paper show that in steady state, the control
input will achieve its desired constant reference value, the VA states
will converge to a constant value (irrespective of the sampling period
used, and the variations on it), and the desired constant setpoint
tracking objective is ensured for the performance variable.
The considered approach can be of particular interest for plants with
input redundancy controlled using a NCS. Future work will focus on the
design of an automatic fault detection and isolation method for the
current approach.


\bibliographystyle{plain} 


\begin{thebibliography}{10}

\bibitem{antbal_tac04}
P.~Antsaklis and J.~Baillieul.
\newblock {Guest Eds.} {S}pecial issue on networked control systems.
\newblock {\em IEEE Trans. on Automatic Control}, 49(9), 2004.

\bibitem{antbai_pieee07}
P.~Antsaklis and J.~Baillieul.
\newblock {Guest Eds.} {S}pecial issue on technology of networked control
  systems.
\newblock {\em Proc. of the IEEE}, 95(1), 2007.

\bibitem{cereke_rts02}
A.~Cervin, J.~Eker, B.~Bernhardsson, and K.-E. Arz\'en.
\newblock Feedback-feedforward scheduling of control tasks.
\newblock {\em Real-Time Systems}, 23:25--53, 2002.

\bibitem{ChF95}
T.~Chen and B.A. Francis.
\newblock {\em Optimal sampled-data control systems}.
\newblock Springer-Verlag, 1995.

\bibitem{daafouz02_stabil}
J.~Daafouz, P.~Riedinger, and C.~Iung.
\newblock Stability analysis and control synthesis for switched systems: A
  switched {Lyapunov} function approach.
\newblock {\em IEEE Trans. on Automatic Control}, 47 No. 11:1883--1887, 2002.

\bibitem{haimobras_2010}
H.~Haimovich and J.H. Braslavsky.
\newblock Feedback stabilization of switched systems via iterative approximate
  eigenvector assignment.
\newblock In {\em 49th IEEE Conf. on Dec. and Control, Atlanta, GA, USA}, pages
  1269--1274, 2010.
\newblock Available at \texttt{arxiv.org}, doi arXiv:1009.2032.

\bibitem{haibra_aucc11}
H.~Haimovich and J.H. Braslavsky.
\newblock Conditions for the genericity of feedback stabilisability of
  discrete-time switching systems based on {L}ie-algebraic solvability.
\newblock In {\em Australian Control Conf., Melbourne, Australia}, 2011.

\bibitem{haibra_rpic11}
H.~Haimovich and J.H. Braslavsky.
\newblock Sufficient conditions for the genericity of feedback stabilisability
  of switching systems via {Lie}-algebraic solvability.
\newblock In {\em XIV Reuni\'on de Trabajo en Procesamiento de la Informaci\'on
  y Control (RPIC), Oro Verde, Entre R\'{\i}os, Argentina}, pages 346--351,
  2011.
\newblock Extended version available as arxiv.org:1109.3070v1{[cs.SY]}.

\bibitem{HB_TAC2013}
H.~Haimovich and J.H. Braslavsky.
\newblock Sufficient conditions for generic feedback stabilizability of
  switching systems via {L}ie-algebraic solvability.
\newblock {\em IEEE Trans. on Automatic Control}, 58(3):814--820, 2013.

\bibitem{haibra_cdc09}
H.~Haimovich, J.H. Braslavsky, and F.~Felicioni.
\newblock On feedback stabilisation of switched discrete-time systems via
  {L}ie-algebraic techniques.
\newblock In {\em 48th IEEE Conf. on Dec. and Control, Shanghai, China}, pages
  1118 -- 1123, 2009.

\bibitem{haibra_tac10}
H.~Haimovich, J.H. Braslavsky, and F.~Felicioni.
\newblock Feedback stabilisation of switching discrete-time systems via
  {L}ie-algebraic techniques.
\newblock {\em IEEE Trans. on Automatic Control}, 56(5):1129--1135, 2011.

\bibitem{haimoose_nahs12}
H.~Haimovich and E.~Osella.
\newblock On controller-driven varying-sampling-rate stabilization via
  {Lie}-algebraic solvability.
\newblock {\em Nonlinear Analysis: Hybrid Systems - available at
  {http://dx.doi.org/10.1016/j.nahs.2012.04.001}}, 2012.

\bibitem{haimo_ifac2011}
H.~Haimovich, E.~N. Osella, and J.~H. Braslavsky.
\newblock On {Lie}-algebraic-solvability-based feedback stabilization of
  systems with controller-driven sampling.
\newblock In {\em IFAC World Congress, Milano, Italy}, pages 8731--8736, 2011.

\bibitem{jemdav_tac03}
L.B. Jemaa and E.J. Davison.
\newblock Performance limitations in the robust servomechanism problem for
  discrete-time {LTI} systems.
\newblock {\em IEEE Trans. on Automatic Control}, 48(8):1299--1311, 2003.

\bibitem{lin09:_stabil}
H.~Lin and P.~J. Antsaklis.
\newblock Stability and stabilisability of switched linear systems: a survey of
  recent results.
\newblock {\em IEEE Trans. on Automatic Control}, 54(2):308--322, February
  2009.

\bibitem{lunze_tac06}
J.a Lunze and T~Steffen.
\newblock Control reconfiguration after actuator failures using disturbance
  decoupling methods.
\newblock {\em IEEE Transactions on Automatic Control}, 51(10):1590--1601,
  2006.

\bibitem{osehaimo_rpic2011}
E.~Osella and H.~Haimovich.
\newblock Sufficient {Lie-algebraic-solvability-based} conditions for
  stabilization of {VSR-DTSSs} with two unstable eigenvalues.
\newblock In {\em XIV Reuni\'on de Trabajo en Procesamiento de la Informaci\'on
  y Control - Oro Verde, Argentina}, pages 863--868, 2011.

\bibitem{osehaimo_aadeca2012}
E.~Osella and H.~Haimovich.
\newblock An output feedback stabilization method for systems with
  controller-driven sampling.
\newblock In {\em 23º Congreso Argentino de Control {Automático} (AADECA),
  Buenos Aires, Argentina}, 2012.

\bibitem{RiL09}
J.H. Richter and J.~Lunze.
\newblock {$H_\infty$-based } virtual actuator synthesis for optimal trajectory
  recovery.
\newblock In {\em Preprints of the 7th IFAC Symposium on Fault Detection,
  Supervision and Safety of Technical Processes (SafeProcess'09)}, Barcelona,
  Spain, 2009.

\bibitem{richter_Auto11}
J.H.a Richter, W.P.M.H.b Heemels, N.b Van De~Wouw, and J.c Lunze.
\newblock Reconfigurable control of piecewise affine systems with actuator and
  sensor faults: Stability and tracking.
\newblock {\em Automatica}, 47(4):678--691, 2011.

\bibitem{sala05_vsrlmi}
A.~Sala.
\newblock Computer control under time-varying sampling period: an {LMI}
  gridding approach.
\newblock {\em Automatica}, 41(12):2077--2082, 2005.

\bibitem{seron_bankVA_ifac11}
M.~M. Seron, J.~{De Don\'a}, and Jan~H. Richter.
\newblock Bank of virtual actuators for fault tolerant control.
\newblock In {\em 18th IFAC World Congress Milano (Italy) August 28 - September
  2}, 2011.

\bibitem{showir_siamrev07}
R.~Shorten, F.~Wirth, O.~Mason, K.~Wulff, and C.~King.
\newblock Stability criteria for switched and hybrid systems.
\newblock {\em SIAM Review}, 49(4):545--592, 2007.

\bibitem{steffen_controlReconfigurations}
Thomas Steffen.
\newblock {\em Control Reconfiguration of Dynamical Systems}.
\newblock Springer Berlin Heidelberg, 2005.

\end{thebibliography}

\end{document}